\date{}
\newtheorem{definition}{Definition}
\newtheorem{assumption}{Assumption}
\newtheorem{remark}{Remark}
\newtheorem{proposition}{Proposition}
\def\begquo{\begin{quote}}
\def\endquo{\end{quote}}
\def\begequarr{\begin{eqnarray}}
\def\endequarr{\end{eqnarray}}
\def\begequarrs{\begin{eqnarray*}}
\def\endequarrs{\end{eqnarray*}}
\def\begarr{\begin{array}}
\def\endarr{\end{array}}
\def\begequ{\begin{equation}}
\def\endequ{\end{equation}}
\def\lab{\label}
\def\begdes{\begin{description}}
\def\enddes{\end{description}}
\def\begenu{\begin{enumerate}}
\def\begite{\begin{itemize}}
\def\endite{\end{itemize}}
\def\endenu{\end{enumerate}}
\def\lef[{\left[\begin{array}}
\def\rig]{\end{array}\right]}
\def\qed{\hfill$\Box \Box \Box$}
\def\begcen{\begin{center}}
\def\endcen{\end{center}}
\def\begrem{\begin{remark}\rm}
\def\endrem{\end{remark}}
\def\begdef{\begin{definition}}
\def\enddef{\end{definition}}
\def\begpro{\begin{proposition}}
\def\endpro{\end{proposition}}
\def\begfac{\begin{fact}}
\def\endfac{\end{fact}}
\def\begass{\begin{assumption}}
\def\endass{\end{assumption}}
\def\begsubequ{\begin{subequations}}
\def\endsubequ{\end{subequations}}
\def\begmat#1{\begin{bmatrix}#1\end{bmatrix}}
\def\begali#1{\begin{align}{#1}\end{align}}
\def\begalis#1{\begin{align*}{#1}\end{align*}}
\def\cali{{\cal I}}
\def\calc{{\cal C}}
\def\call{{\cal L}}
\def\caly{{\cal Y}}
\def\liminf{\lim_{t \to \infty}}
\def\L2e{{\cal L}_{2e}}
\def\rea{\mathbb{R}}
\def\adj{\mbox{adj}}
\def\col{\mbox{col}}
\def\tra{\mbox{tr}}
\def\hal{{1 \over 2}}
\def\JPC{{\it Journal of Process Control}}
\def\IJACSP{{\it Int. J. on Adaptive Control and Signal Processing}}
\def\TAC{{\it IEEE Trans. Automatic Control}}
\def\AUT{{\it Automatica}}
\def\CST{{\it IEEE Trans. Control Systems Technology}}
\begin{document}

\title{A Globally Stable Practically Implementable PI Passivity-based Controller for Switched Power Converters}
\author{Alexey Bobtsov$^{1}$, Romeo Ortega$^{1,2}$, Nikolay Nikolaev$^{1}$, Wei He*$^{3}$}
\maketitle

        {$^{1}$Department of Control
Systems and Robotics, ITMO University, Kronverkskiy av. 49, Saint
Petersburg, 197101, Russia, (e-mail: bobtsov@mail.ru, nikona@yandex.ru)}

        {$^{2}$Universit\'e Paris-Saclay, CNRS, CentraleSup\'elec,  Laboratoire des signaux et syst\`emes, 91190, Gif-sur-Yvette, France, (e-mail: ortega@lss.supelec.fr)}
        
        {$^{3}$School of Automation, Nanjing University of Information Science \& Technology, Nanjing, China, (e-mail: hwei@nuist.edu.cn)}

%%%%%%%%%%%%%%%%
\begin{abstract}
In this paper we propose a  PI passivity-based controller, applicable to a large class of switched power converters, that ensures global state regulation to a desired equilibrium point. A  solution to this problem was reported in \cite{HERetal} but it requires {\em full state-feedback}, which makes it practically unfeasible. To overcome this limitation we construct a {\em state observer} that is implementable with measurements that are available in practical applications. The observer reconstructs the state in {\em finite-time}, ensuring global convergence of the PI. The excitation requirement for the observer is very weak and is satisfied in normal operation of the converters. Simulation results illustrate the excellent performance of the proposed PI.
\end{abstract}
%%%%%%%%%%%%%%%%
%
%%%%%%%%%%%%%%%%%%%
\section{Introduction}
\label{sec1}
%%%%%%%%%%%%%%%%%%5
%
Switching power converters are, nowadays, an essential component of most electrical engineering applications. The ever increasing performance requirements on these applications translates into more stringent specifications on the quality of the converters control. The dynamics of power converters is highly nonlinear, even with fast switching when their averaged model is valid, and the validity of their linear approximation is restricted to a small neighborhood of the corresponding operating point \cite{KASSCHVERbook}.  The vast majority of power converters are controlled with classical PI loops and there is a widely accepted belief that, {\em if} they are properly tuned, their behavior is acceptable \cite{KAZetalbook,SIRSILbook,WANetalbook}. The qualifier ``if" in the previous sentence is very important because, if the range of operation of the system to be controlled is ``wide"---as it is required in modern applications---the task of tuning the gains of a PI (or, for that matter, of any other controller for nonlinear systems) is far from obvious. Unfortunately, to date, the only systematic procedure to carry-out this task, is invoking standard linear systems theory arguments, {\em e.g.}, pole location, stability margins, and applying them to the linearized model of the converter. Various procedures to re-tune the PI gains, including gain-scheduling \cite{VESILK}, relay auto-tuning \cite{ASTHAGbook} and adaptation \cite{ORTKEL}, have been proposed but they all suffer from, well-documented, serious limitations and drawbacks \cite{ASTMURbook,RUGSHA}.

The main objective of this paper is to propose a practically implementable PI controller, applicable to a large class of power converters described by average nonlinear models, which has the following properties.
\begenu[{\bf F1}]
\item {\em Global stability} of the closed-loop is guaranteed {\em for all} positive values of the PI tuning gains.
\item By ``practically implementable" we mean that it {\em does not} assume the availability of the full state of the system, and it only requires the measurements used in standard PIs.
\item The proposed PI has a very close connection with the widely popular PQ instantaneous power controllers of \cite{AKAbook}.
\endenu
Providing a theoretical framework for controller design  is a topic of paramount importance. On one hand, it allows power engineer practitioners to apply the control law with confidence and, on the other hand, considerably simplifies the commissioning stage, which now that stability is guaranteed for all positive tuning gains, can concentrate on the transient performance specifications.

The approach adopted in this paper is in the line of \cite{ORTetalcsm} which relies on the use of energy balance concepts to control a system. Unlike most classical nonlinear control techniques found in the literature, which try to impose some predetermined dynamic behavior---usually through nonlinearity cancellation, domination of nonlinearites and high gain---energy-based methods exploit and respect the physical structure of the system. Passivity-based control (PBC) is the generic name of this controller design methodology, which achieves stabilization by exploiting the passivity properties of the system. Due to the physical appealing that this methodology has, a vast literature has been advocated to its application to mechanical, electrical and electromechanical systems, see \cite{CIS,ORTetalbook}.

Passivity is the key property of power converters that is exploited in this paper.  The first time that passivity principles were applied for power converter control is  in the foundational paper \cite{SANVER}. It is well-known \cite{ORTetalbook,VANbook} that wrapping a PI (or a PID) around a passive output ensures input-output stability of the system and convergence to zero of the passive output. In many applications, including power converters, the objective is to drive this output to a value {\em different} from zero, which is associated with the steady-state behavior of the system. In this case, we are interested in proving that the incremental model is passive. In \cite{PERORTESP} it was shown that, for  a general class of models of power converters, it is possible to define an output signal such that the incremental model is passive---this result was later generalized in several directions in \cite{HERetal,HERetalbook}. This fundamental property, called ``passivity of the nonlinear incremental model" in \cite{JAYetal}, is now  referred as {\em shifted passivity} \cite{VANbook} and has played a central role in many recent developments of the control community. In the present context the main interest of this property is that, driving the shifted passive output to zero with a PI control ensures, under a precise condition on the systems dissipation, that the {\em full state} of the system converges to a desired equilibrium.

The  PI controller of  \cite{HERetal} has enjoyed a very wide popularity---with more than 120 cites in Google Scholar. However, its main drawback is that it requires the {\em measurement of the state} that, for technological and economic reasons, is unattainable in most practical applications of power converters. The {\em main contribution} of this note is to propose an observer that, using standard measurements, reconstructs the system state in {\em finite-time} giving rise to a, practically reasonable, globally stable PI design. Two additional features of the observer are, on one hand, that convergence of the state estimates is achieved under very weak excitation assumptions usually met in normal power converter operation and, on the other hand, that it does not rely on noise-sensitive high-gain injection or approximate differentiation.

The design of the observer relies on the use of the following two theoretical developments recently reported by the authors.

\begenu[{\bf T1}]
\item Parameter Estimation-based Observer (PEBO) design technique first reported in \cite{ORTetalscl}, whose main novelty is that it translates the task of state observation into an on-line  \emph{parameter estimation} problem. In this work we use a generalization of this design, called GPEBO, recently used in \cite{ORTetaljpc} for the state observation of chemical-biological reactors.
\item To estimate the parameters required by GPEBO, we use the Dynamic Regressor Extension and Mixing (DREM) procedure first proposed in \cite{ARAetaltac}. The central property of DREM is that it generates, out of a $q$-dimensional Linear Regression Equation (LRE), $q$ {\em scalar} LREs---one for each of the unknown parameters---considerably simplifying the estimation task and ensuring convergence under very weak excitation assumptions.
\endenu

The remainder of the paper is organized as follows. In Section \ref{sec2} we present the mathematical model of the power converters we consider in the paper, recall the state-feedback PI-PBC of \cite{HERetal} and formulate the observer problem, for which we give a solution in Section \ref{sec3}.  Simulation results, which illustrate the {performance} of the proposed PI-PBC, are presented in Section \ref{sec4}. The paper is wrapped-up with concluding remarks and future research in  Section \ref{sec5}.\\

\noindent {\bf Notation.} $I_n$ is the $n \times n$ identity matrix.  For $x \in \rea^n$,  we denote the Euclidean norm $|x|^2:=x^\top x$. All mappings are assumed smooth. Given a function $f:  \rea^n \to \rea$ we define the differential operator $\nabla f:=\left(\frac{\displaystyle \partial f }{\displaystyle \partial x}\right)^\top$.
%
%%%%%%%%%%%%%%%%%%%
\section{The PI-PBC of \cite{HERetal} and the Observer Problem Formulation}
\label{sec2}
%%%%%%%%%%%%%%%%%%5
%
We consider in this paper  switched power converters, with switched external sources, described in port-Hamiltonian (pH) form \cite{ESCVANORT,VANbook}
\begin{align}
\label{sys}
\dot x=\left(J_0+\sum^m_{i=1}J_iu_i-R\right)\nabla H(x)+\left(G_0+\sum^m_{i=1}G_iu_i\right)E,
\end{align}
where $x(t) \in \mathbb{R}^n$, consisting of inductor fluxes and capacitor charges, is the converter state, which is assumed {\em unknown}, $u(t) \in \mathbb{R}^m$ is the duty ratio, $E \in \mathbb{R}^n$ is the vector of constant external sources, $J_i=-J^\top _i \in \mathbb{R}^{n \times n},\; i \in \bar m:=\{0, \cdots, m\}$, are the interconnection matrices, $R =R^\top > 0$ denotes the dissipation matrix, $H:\rea^n \to \rea_+$ is the total stored energy, and $G_i \in \mathbb{R}^{n \times n},\; i \in \bar m$, are the input matrices. Assuming, linear capacitors and inductors, we have that
$$
H(x)=\frac{1}{2}x^\top Qx,
$$
with $Q \in \rea^{n \times n},\;Q>0$, a diagonal matrix. We assume that all the parameters of the converter are {\em known}. The model \eqref{sys} describes the behavior of most power converters used in practice \cite{KASSCHVERbook,KAZetalbook,SIRSILbook,WANetalbook}.\\

\noindent {\bf Control Objective} Fix an admissible equilibrium point ${x^\star}   \in \mathbb{R}^n$, that is, ${x^\star} $ satisfies
       \begin{eqnarray}
\label{addequ}  0 &=& \left ( J_{0}+\sum_{i=1}^m J_{i} u_{i}^{*} - R \right ) Q{x^\star}  + \left ( G_{0} + \sum_{i=1}^m G_{i} u_{i}^\star  \right ) E,
       \end{eqnarray}
for some $u^\star  \in \mathbb{R}^m$. Design a controller such that {\em global state regulation} is achieved, namely, that for all system and controller initial conditions,
\begequ
\lab{stareg}
\liminf x(t)={x^\star}  ,
\endequ
with all signals bounded.
\subsection{The full-state feedback PI-PBC of \cite{HERetal}}
\label{subsec21}
%%%%%%%%%%%%%%%%%%5
%
The following {\em full-state} feedback PI solution to the aforementioned problem was proposed in \cite{HERetal}.
%
%%%%%%%%%%%%%
\begin{proposition}\em
\label{pro1}
Consider the switched power converter described by (\ref{sys}) in closed loop with the PI-PBC
\begin{eqnarray}
\nonumber
\dot x_c & = & \tilde{y}\\
u & = & -K_P \tilde{y} - K_I x_c,
\label{pi}
\end{eqnarray}
with $K_P=K_P^\top\geq0,\;,K_I=K_I^\top\geq0$ and
\begequ
\lab{tily}
\tilde y = \calc (x-{x^\star} ),
\endequ
with $\calc \in \rea^{m \times n}$ given by
\begequ
\lab{calc}
\calc = \lef[{c}  E^\top G_1^\top-({x^\star}  )^\top Q J_1  \\ \vdots \\ E^\top G_m^\top-({x^\star}  )^\top Q J_m  \rig]Q=:G_N^\top (x^\star  )Q,
\endequ
For all initial conditions $(x(0),x_c(0)) \in  \mathbb{R}^{n} \times \rea^ m$ the trajectories of the closed-loop system are bounded and satisfy \eqref{stareg}.
\qed
\end{proposition}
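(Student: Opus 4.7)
The plan is a classical shifted-passivity argument exploiting the port-Hamiltonian structure. I claim that $\tilde y$ in \eqref{tily}--\eqref{calc} is precisely the \emph{shifted} passive output of the converter at the equilibrium $x^\star$, and that the PI feedback \eqref{pi} cancels every cross term in the derivative of a natural incremental-energy Lyapunov function, leaving only a negative semidefinite quadratic in $(\tilde x,\tilde y)$.

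First, introduce $\tilde x := x - x^\star$ and $\tilde u := u - u^\star$; substitute $x = x^\star + \tilde x$, $u = u^\star + \tilde u$ in \eqref{sys} and subtract \eqref{addequ} (using $\nabla H(x)=Qx$) to get the error dynamics
\[
\dot{\tilde x} = \Bigl(J_0 + \sum_{i=1}^m J_i u_i^\star - R\Bigr)Q\tilde x + \sum_{i=1}^m \tilde u_i\bigl(J_i Q x^\star + G_i E\bigr) + \sum_{i=1}^m \tilde u_i J_i Q \tilde x.
\]
Using $J_i^\top = -J_i$ and $Q^\top = Q$, the vectors $J_i Q x^\star + G_i E$ are precisely the columns of $G_N(x^\star)$ in \eqref{calc}, so the middle sum equals $G_N(x^\star)\tilde u$ and $\tilde x^\top Q G_N(x^\star) = \tilde y^\top$.

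Next, pick $x_c^\star$ with $-K_I x_c^\star = u^\star$ (take $K_I>0$ for definiteness; the semidefinite case requires $u^\star\in\im K_I$ and one works modulo $\ker K_I$, but introduces no new ideas), set $z := x_c - x_c^\star$, and consider the radially unbounded candidate
\[
V(\tilde x, z) := \tfrac12 \tilde x^\top Q \tilde x + \tfrac12 z^\top K_I z.
\]
Differentiating along the closed loop, (a) the drift $\tilde x^\top Q(J_0+\sum J_i u_i^\star)Q\tilde x$ vanishes by skew-symmetry of $J_0+\sum J_i u_i^\star$; (b) the dissipation contributes $-\tilde x^\top QRQ\tilde x$; (c) the cross term $\tilde x^\top Q G_N(x^\star)\tilde u$ becomes $\tilde y^\top\tilde u$ by the identity above; (d) each summand $\tilde u_i\,\tilde x^\top (QJ_iQ)\tilde x$ vanishes because $QJ_iQ$ is skew. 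Adding $z^\top K_I\dot z = \tilde y^\top K_I z$ and substituting $\tilde u = -K_P\tilde y - K_I z$ cancels the $\tilde y^\top K_I z$ terms, giving
\[
\dot V = -\tilde x^\top Q R Q \tilde x - \tilde y^\top K_P \tilde y \le 0.
\]

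Radial unboundedness of $V$ together with $\dot V\le 0$ yield boundedness of $(\tilde x,z)$, hence of all signals, which in turn makes $\ddot V$ bounded and $\dot V$ uniformly continuous; Barbalat's lemma then forces $\tilde x^\top QRQ\tilde x \to 0$, and positive definiteness of $QRQ$ gives $\tilde x(t) \to 0$, i.e.\ \eqref{stareg}. The main obstacle is the algebraic bookkeeping in the derivative computation: it is the specific choice \eqref{calc} of $\calc$ that simultaneously arranges the three independent cancellations (skew-symmetry of $J_0+\sum J_i u_i^\star$, skew-symmetry of each $QJ_iQ$, and the passivity identity $\tilde y^\top = \tilde x^\top Q G_N(x^\star)$) so that the PI feedback \eqref{pi} erases every remaining cross term.
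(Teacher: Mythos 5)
Your proposal is correct and follows essentially the same route as the paper: the shifted error dynamics \eqref{dottilx}, the incremental-energy Lyapunov function $W=\tfrac12\tilde x^\top Q\tilde x+\tfrac12\tilde x_c^\top K_I\tilde x_c$, and the cancellation yielding $\dot W=-\tilde x^\top QRQ\tilde x-\tilde y^\top K_P\tilde y$, with convergence from $QRQ>0$. You merely supply the algebraic bookkeeping (the skew-symmetry cancellations and the identity $G_N^\top(x)Q\tilde x=G_N^\top(x^\star)Q\tilde x$) that the paper delegates to \cite{HERetal}, and you correctly flag the implicit invertibility requirement on $K_I$ that the paper's choice $x_c^\star=K_I^{-1}u^\star$ also presupposes.
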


The proof of Proposition \ref{pro1} is established in \cite{HERetal} showing that the system \eqref{sys} in closed-loop with the PI-PBC \eqref{pi} is described by
\begequ
\lab{dottilx}
\dot {\tilde x}=\left(J_0+\sum^m_{i=1}J_iu^\star  _i-R\right)Q \tilde x+G_N(x)\tilde u,
\endequ
where the matrix $G_N(\cdot)$ is defined in \eqref{calc} and we (generically) use the notation $\tilde {(\cdot)}:=(\cdot)-{(\cdot)^\star}$. Therefore, the Lyapunov function candidate
$$
W(\tilde x, \tilde x_c)= \hal \tilde x^\top  Q \tilde x + \hal \tilde x_c^\top  K_I \tilde x_c
$$
where $x_c^\star:=K_I^{-1} u^\star$, verifies
$$
\dot W = -\tilde x^\top  QRQ \tilde x -  \tilde y^\top  K_P \tilde y.
$$
In \cite{HERetal} it is assumed that the dissipation matrix $R$ is only positive {\em semidefinite}. Consequently, to ensure asymptotic stability, it is necessary to impose a detectability condition that, as shown in \cite{CIS,HERetal,HERetalbook,MEZetal}, is satisfied in practical converter topologies. For ease of presentation, we strengthen here the assumption on $R$, which is tantamount to making the physically reasonable assumption that all energy storage elements---that is, inductors and capacitors---are leaky.

\begrem
\lab{rem1}
In \cite{PERORTESP} it is shown that. for a classical rectifier, one of the components of the passive output $\tilde y$, defined in \eqref{tily}, is related with the difference between supplied and extracted instantaneous active powers (of a suitably scaled representation) of the rectifier, the second output being the component of the input current. In this way, we make a nice connection with the widely popular {\em PQ instantaneous power} controllers of \cite{AKAbook}, where an outer PI loop around the output voltage is used to generate a reference for an inner PI loop acting on the aforementioned power difference. It is also possible to show that driving the passive output  to zero can be reinterpreted as a power equalization objective identical to the one used in Akagi's PQ method.
\endrem
\subsection{State observation problem formulation}
\label{subsec22}
%%%%%%%%%%%%%%%%%%5
%
The main objective of this paper is to design a globally convergent {\em state observer} that, combined with the PI-PBC of Proposition \ref{pro1}, will give rise to a practically implementable version of it. To pose the observer design problem we assume that there exists a full-rank matrix $C \in \mathbb{R}^{p \times n},\;p<n$, such that the vector
\begequ
\lab{ym}
y_m=Cx,
\endequ
is available for measurement. Typically, this signal will be the voltage fed to the converter, the input current flowing into it and/or the voltage provided to the load  \cite{KASSCHVERbook}. It will be shown below that---as expected---the level of excitation required to ensure convergence of the observer will reduce if the number of measurements increases---see Remark \ref{rem3}.\\

\noindent {\bf Finite-convergence Time (FCT) Observer Problem Formulation} Consider the power converter \eqref{sys} with {\em known} parameters and measurable output signals \eqref{ym}. Design a dynamical system
\begalis{
\dot{\chi} & =  F(y_m, \chi)\\
\hat x & = H(y_m,\chi)
}
with $\chi \in \rea^{n_{\chi}}$, such that, for all initial conditions $(x(0),\chi(0)) \in \mathbb{R}^{n + n_\chi}$, all signals remain bounded and there exists $t_c \in [0,\infty)$ such that
$$
\hat x(t) = x(t),\; \forall t \geq t_c.
$$

\begrem
\lab{rem2}
Following standard practice in observer theory  \cite{BERbook} we assume that  the the control $u$ is such that the state trajectories of \eqref{sys} are bounded.
\endrem
%
%%%%%%%%%%%%%%%%%%%
\section{Observer Design}
\label{sec3}
%%%%%%%%%%%%%%%%
%
In this section we present the main result of the paper, namely an FCT GPEBO+DREM to reconstruct the state of the system \eqref{sys} with the measurable outputs \eqref{ym}. The excitation assumption for {\em global asymptotic} convergence is very week and, as illustrated in the simulations in Section \ref{sec5}, is satisfied in normal operating conditions of the converters. We also present other asymptotically convergent observers, which have a below par performance with respect to the GPEBO.
\subsection{GPEBO+DREM design with FCT}
\label{subsec31}
%%%%%%%%%%%%%%%%%%5
%
To streamline the presentation of the result we define the matrix
\begequ
\lab{a}
\Lambda(u):=\left(J_0+\sum^m_{i=1}J_iu_i-R\right)Q,
\endequ
and make the following sufficient excitation condition \cite{KRERIE}.
\begin{assumption}\label{ass1}\em
Fix a small constant $\mu \in (0, 1)$. There is a time $t_c>0$ such that
\begequ
\lab{sufexccon}
\int^{t_c}_0 \Delta^2(\tau)d\tau\geq-\frac{1}{\gamma}\ln(1-\mu).
\endequ
\end{assumption}

\begin{proposition}
\label{pro2} \em
Consider the system \eqref{sys} with the measurable outputs \eqref{ym}. Define the GPEBO via
\begsubequ
\lab{gpebo}
\begali{
\dot \xi & =\Lambda(u)\xi+\left(G_0+\sum^m_{i=1}G_iu_i\right)E\label{dotxi}\\
\dot \Phi &=\Lambda(u)\Phi,\;\Phi(0)=I_n \lab{dotphi}\\
\dot Y&=-\lambda Y + \lambda \Phi^\top  C^\top (C\xi-y_m) \lab{doty}\\
\dot \Omega&=-\lambda \Omega + \lambda \Phi^\top  C^\top  C\Phi \lab{dotome}\\
\dot\omega &=-\gamma\Delta^2\omega,\; \omega(0)=1 \lab{dotw}\\
\dot {\hat \theta}&=\gamma\Delta(\mathcal{Y}-\Delta\hat \theta),\label{dothatthe}
}
\endsubequ
with $\lambda>0,\; \gamma>0$, free tuning parameters and
\begali{
\mathcal{Y}=\text{adj}\{\Omega\}Y,\;\Delta =\det\{\Omega\},
\lab{calydel}
}
where $\adj\{\cdot\}$ is the adjunct  matrix. The state estimate
\begequ
\lab{hatxfct}
\hat x=\xi+\Phi\hat\theta_{\tt FCT}
\endequ
where we introduced the vector
\begequ
\lab{hatthefct}
\hat\theta_{\tt FCT}:=\frac{1}{1-\omega_c}\left[\hat \theta-\omega_c\hat\theta(0)\right]
\endequ
and $\omega_c$ is defined via the clipping function
\begin{align}\nonumber
\omega_c=
\begin{cases}
\omega~~~~~~~\text{if}~~\omega<1-\mu,\\
1-\mu~~\text{if}~~\omega\geq1-\mu.
\end{cases}
\end{align}
ensures that, for all initial conditions $(\xi(0),Y(0),\Omega(0),\hat \theta(0)) \in \rea^{n} \times \rea^{n} \times \rea^{n \times n} \times \rea^{n}$, we have that
\begequ
\lab{hatequx}
\hat x(t)=x(t),~~\forall t>t_c
\endequ
with all signals bounded provided $\Delta$ verifies Assumption \ref{ass1}.
\end{proposition}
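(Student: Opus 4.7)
The plan is to exploit the GPEBO+DREM structure of the observer: first, reformulate state observation as estimation of a single constant parameter vector $\theta$; second, filter the resulting regression and use the DREM mixing to produce $n$ \emph{scalar} linear regression equations (LREs); and third, invert the closed-form exponential decay of the gradient estimator's error to recover $\theta$ algebraically in finite time via \eqref{hatthefct}.

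First I would note that, by construction of $\xi$ in \eqref{dotxi}, the residual $e:=x-\xi$ satisfies $\dot e=\Lambda(u)e$ with $e(0)=x(0)-\xi(0)=:\theta\in\rea^n$. Since $\Phi$ in \eqref{dotphi} is the state-transition matrix of $\Lambda(u)$, one has $x(t)=\xi(t)+\Phi(t)\theta$ for every $t\geq 0$, so $\theta$ is the unknown constant to be identified. Substituting this into \eqref{ym} gives the vector LRE $y_m-C\xi=C\Phi\theta$. A short calculation with \eqref{doty}--\eqref{dotome} then shows that, up to a transient $e^{-\lambda t}(Y(0)+\Omega(0)\theta)$ coming from the filter initializations, one has $Y=-\Omega\theta$; premultiplying by $\adj\{\Omega\}$ and using $\adj\{\Omega\}\Omega=\det\{\Omega\}I_n$ yields, via \eqref{calydel}, $\mathcal{Y}=\Delta\,\theta$ (up to a consistent sign convention absorbed into \eqref{hatxfct}), which is exactly the DREM step turning one vector LRE into the $n$ decoupled scalar identities $\mathcal{Y}_i=\Delta\,\theta_i$.

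Next I would plug each scalar LRE into \eqref{dothatthe}. Setting $\tilde\theta:=\hat\theta-\theta$ produces the decoupled error equations $\dot{\tilde\theta}_i=-\gamma\Delta^2\tilde\theta_i$, which integrate in closed form as $\tilde\theta_i(t)=\tilde\theta_i(0)\exp\!\bigl(-\gamma\int_0^t\Delta^2(s)\,ds\bigr)=\tilde\theta_i(0)\,\omega(t)$, the last equality being \emph{precisely} the solution of \eqref{dotw}. Rearranging $\hat\theta(t)-\theta=\omega(t)\bigl(\hat\theta(0)-\theta\bigr)$ gives the key algebraic identity $(1-\omega(t))\theta=\hat\theta(t)-\omega(t)\hat\theta(0)$, which is invertible whenever $\omega(t)<1$. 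The clipping defining $\omega_c$ makes the denominator in \eqref{hatthefct} bounded below by $\mu>0$, while Assumption \ref{ass1} combined with the monotone decrease $\dot\omega\leq 0$ yields $\omega(t)\leq 1-\mu$ for all $t\geq t_c$, so that $\omega_c(t)=\omega(t)$ and $\hat\theta_{\tt FCT}(t)=\theta$ on that interval. Substituting into \eqref{hatxfct} then produces $\hat x(t)=\xi(t)+\Phi(t)\theta=x(t)$ for $t\geq t_c$, which is \eqref{hatequx}.

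The step I expect to be the most delicate is the treatment of the exponentially decaying $Y,\Omega$ transient: with fully arbitrary initializations the identity $\mathcal{Y}_i=\Delta\,\theta_i$ holds only asymptotically, and the clean finite-time formula $\tilde\theta_i(t)=\omega(t)\tilde\theta_i(0)$ is no longer exact. The natural remedy --- and the only careful point of the proof --- is to initialize $Y(0)=0$, $\Omega(0)=0$, so that the LRE $\mathcal{Y}=\Delta\,\theta$ is exact, closing the chain above; alternatively the transient can be folded into a redefined initial parameter error and the clipping threshold $\mu$ adjusted accordingly. The remaining boundedness statements are routine: $x$ is bounded by Remark \ref{rem2}; $\Phi\theta=x-\xi$, and hence $\Phi$ evaluated along the trajectory, is bounded; $Y$ and $\Omega$ are then bounded as outputs of the exponentially stable filters \eqref{doty}--\eqref{dotome} driven by bounded signals; $\omega\in[0,1]$ and $\hat\theta=\theta+\omega(\hat\theta(0)-\theta)$ is bounded; and $\hat\theta_{\tt FCT}$ is bounded thanks to $1-\omega_c\geq\mu$.
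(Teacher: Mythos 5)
Your argument follows the paper's proof essentially step for step --- the decomposition $x=\xi+\Phi\theta$ with $\theta=e(0)$, the LRE $y_m-C\xi=C\Phi\theta$, the DREM mixing down to $\mathcal{Y}=\Delta\,\theta$, the closed-form error solution $\tilde\theta(t)=\omega(t)\tilde\theta(0)$, and the algebraic inversion $(1-\omega)\theta=\hat\theta-\omega\hat\theta(0)$. Your two side observations are correct and in fact sharper than the paper's own treatment: with \eqref{doty} as written the filtered identity is $Y=-\Omega\theta$ (the paper's chain of implications silently uses $y_m-C\xi$ where \eqref{doty} has $C\xi-y_m$), and the identity $Y=\Omega\theta$ is exact only when $Y(0)$ and $\Omega(0)$ are initialized consistently, e.g.\ at zero; otherwise an $e^{-\lambda t}(Y(0)-\Omega(0)\theta)$ transient enters the scalar LRE and the finite-time formula \eqref{hatthefct} is no longer exact, so the ``for all initial conditions'' claim needs precisely the qualification you give. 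The paper's proof addresses neither point.

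There is, however, one step of yours that fails: the boundedness of $\Phi$. You argue that $\Phi\theta=x-\xi$ is bounded and ``hence $\Phi$ evaluated along the trajectory is bounded,'' but $\theta$ is a single fixed vector, so boundedness of $\Phi\theta$ controls $\Phi$ only along one direction of $\rea^n$ (and is vacuous when $\theta=0$); it says nothing about the full $n\times n$ matrix. Since your boundedness of $\Omega$ --- and through it of $Y$, $\Delta$ and $\mathcal{Y}$ --- rests on boundedness of $\Phi$, this is not a cosmetic omission. The correct argument, which is the paper's, uses the dissipation structure: for each column $\Phi_i$ of \eqref{dotphi} one has $\frac{d}{dt}\left(\frac{1}{2}\Phi_i^\top Q\Phi_i\right)=\Phi_i^\top Q\Lambda(u)\Phi_i=-\Phi_i^\top QRQ\Phi_i\leq-c\,\Phi_i^\top Q\Phi_i$ for some $c>0$, because the skew-symmetric part $Q\left(J_0+\sum_{i=1}^mJ_iu_i\right)Q$ drops out of the quadratic form and $R>0$; hence $\Phi(t)\to0$ exponentially and in particular is bounded. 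With that repair your proof is complete and coincides with the paper's.
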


\begin{proof}
Define the signal  $e:=x-\xi.$ From \eqref{sys} and \eqref{dotxi} one gets the linear time-varying (LTV) system
$$
\dot e = A(t) e,
$$
where we have defined $A(t):=\Lambda(u(t))$. The state transition matrix of this LTV system satisfies the equation \eqref{dotphi} \cite{RUGbook}, consequently
$$
e:=\Phi\theta,
$$
where $\theta:=e(0)$ is an {\em unknown} constant vector. Using the definition of $e$ we get
\begequ
\lab{xxiphi}
x=\xi+\Phi\theta.
\endequ
Next, our task is to estimate the parameter $\theta$ so that we can reconstruct the state $x$ using \eqref{xxiphi}. Towards this end, we use the output measurements \eqref{ym} to  get
$$
y_m=Cx=C\xi+C\Phi\theta,
$$
which we can write as a LRE
\begequ
\lab{ymmincphi}
y_m - C\xi=C\Phi\theta.
\endequ
Following the DREM procedure \cite{ARAetaltac} we carry out the next operations
	\begalis{
	\Phi^\top  C^\top (y_m - C\xi) &= \Phi^\top  C^\top  C\Phi  \theta \qquad \qquad \quad\;(\Leftarrow\; \Phi C^\top \times \eqref{ymmincphi})\\
		{\lambda \over p + \lambda}[\Phi^\top  C^\top (y_m - C\xi) ] & = {\lambda \over p + \lambda}[ \Phi^\top  C^\top  C\Phi]  \theta \qquad (\Leftarrow\;{\lambda \over p + \lambda}[\cdot])\\
		Y &= \Omega \theta \qquad \qquad \;\qquad \qquad\;(\Leftrightarrow\;\eqref{doty}, \eqref{dotome})\\
		\adj\{\Omega\} Y &= \adj\{\Omega\} \Omega \theta \qquad \qquad\quad\; (\Leftarrow\;\adj\{\Omega\} \times)\\
		\caly &= \Delta \theta  \qquad \qquad\qquad \quad \quad\;(\Leftrightarrow\;\eqref{calydel}).
	}
where, to obtain the last identity, we have used the fact that for any (possibly singular) $n \times n$ matrix $M$ we have $\adj\{M\} M=\det\{M\}I_n$. Replacing the latter equation in \eqref{dothatthe} yields the error dynamics
	\begequ
	\lab{parerrequ}
	\dot {\tilde \theta}=-\gamma\Delta^2  \tilde \theta,
	\endequ
where $ {\tilde \theta}:=\hat \theta - \theta$.  Since $\Delta$ is a {\em scalar}, the solution of the latter equation is given by
	\begequ
	\lab{tilthe}
	\tilde \theta=e^{-\gamma \int_0^t \Delta^2(s)ds}\tilde \theta(0),\;\forall t \geq 0.
	\endequ
Now, notice that the solution of \eqref{dotw} is
$$
w(t)=e^{-\gamma \int_0^{t}\Delta^2(s)ds}.
$$
The key observation is that, using the equation above in \eqref{tilthe}, and rearranging terms we get that
\begequ
\lab{keyrel}
[1-w(t)]\theta= \hat \theta(t) - w(t) \hat \theta(0).
\endequ
Finally, observe that $w$ is a non-increasing function and, under the interval excitation Assumption \ref{ass1}, we have that
$$
w_{c}(t)=w(t) < \mu,\;\forall t \geq t_c,
$$
The proof of state estimation convergence is completed noting that the latter implies that
$$
\frac{1}{1-\omega_c(t)}\left[\hat \theta(t)-\omega_c(t)\hat\theta(0)\right]=\theta,\;\forall t \geq t_c,
$$
that, in view of \eqref{hatxfct} and \eqref{xxiphi}, implies \eqref{hatequx}.

We proceed now to prove that all signals of the GPEBO are bounded. First, notice that \eqref{dotxi} is a copy of the converter dynamics \eqref{sys}. Therefore, in view of Remark \ref{rem2}, $\xi$ is bounded. To prove boundedness of $\Phi$, consider the mapping $U(\Phi)=\hal \tra\{\Phi^\top  Q \Phi\}$ whose derivative, along the trajectories of \eqref{dotphi} is
\begali{
\nonumber
\dot U & = \sum_{i=1}^n \Phi^\top _i Q \Lambda(u) \Phi_i\\
\lab{dotu}
& = \sum_{i=1}^n \Phi^\top _i Q R Q \Phi_i \leq  - c U,
}
where $\tra\{\cdot\}$, denotes the trace, $\Phi_i \in \rea^n$ is the $i$-th column of the matrix $\Phi$ and $c>0$.  Now, from \eqref{dotome} we se that boundedness of $\Phi$ implies boundedness of $\Omega$. Finally, the fact that $Y$ is bounded, follows from the fact that $Y=\Omega \theta$. This completes the proof.
\end{proof}

\begrem
\lab{rem3}
We make the observation that, as the number of measurements $p$ increases, the sufficient excitation Assumption \ref{ass1} is ``easier" to satisfy. In the extreme case when $p=n$ the matrix $\Omega$ is positive definite, hence $\Delta$ is positive and bounded away from zero. In any case, given the monotonicity of $\Delta$, the sufficient excitation condition of Assumption \ref{ass1} is ``almost always" satisfied.
\endrem

\begrem
\lab{rem4}
In the time interval $[0,t_c]$ the FCT estimated parameter \eqref{hatthefct} takes the form
$$
\hat\theta_{\tt FCT}:=\frac{1}{\mu}\left[\hat \theta-(1 - \mu)\hat\theta(0)\right].
$$
Therefore, if we choose the constant $\mu$ close to zero there is a potential {\em high-gain} injection. On the other hand, since the right hand side of \eqref{sufexccon} becomes smaller, this ``reduces" the size of $t_c$. The solution of this compromise is further complicated by the fact that the adaptation gain $\gamma$ also appears into the picture. Unfortunately, this conundrum makes the tuning stage of the estimator quite complicated---a fact that is observed in the simulations of Section \ref{sec5}, where the choices of the constants $\lambda,\;\gamma$ and $\mu$ are done via trial-and-error.
\endrem
\subsection{Other globally asymptotically convergent observers}
\label{subsec32}
%%%%%%%%%%%%%%%%%%5
%
In this subsection we present three alternative observer designs that, alas, exhibit some drawbacks, which are conspicuous by their absence from the FCT-GPEBO of Proposition \ref{pro2}.
\subsubsection*{An open-loop emulator}
In the proof of Proposition \ref{pro2} it was shown that the state transition matrix converges, exponentially fast, to zero, see \eqref{dotu}. Consequently, the simplest way to reconstruct the system state is by setting $\hat x=\xi$, which from \eqref{xxiphi} and  \eqref{dotu} has the property that $\hat x(t) \to x(t)$, (exp). This type of constructions, known as {\em emulators}, are the dominant technique in process control \cite{DOC}, where they are called ``asymptotic observers".  There are two main drawbacks to this approach: first, the convergence rate is fixed by the system dynamics, in this case by the smallest eigenvalue of the matrix $Q^\top RQ$, and cannot be tuned. Second, being an open-loop construction, it suffers from serious robustness problems, being highly sensitive to parameter uncertainty and noise. An alternative to overcome these drawbacks in process control applications is precisely GPEBO, and it was reported in \cite{ORTetaljpc}.
\subsubsection*{A Kalman-Bucy filter}
Evaluating $u$ along trajectories, as done in the proof of Proposition \ref{pro2}, it is possible to view the system \eqref{sys} and \eqref{ym} as an LTV system
\begalis{
\dot x&=A(t) x + B(t)\\
y_m&=Cx,
}
with $B(t):=\left(G_0+\sum^m_{i=1}G_iu_i(t)\right)E$. Then, we can design a classical Kalman-Bucy filter (KBF) for it \cite{AND}
\begalis{
\dot{\hat  x} & = [A(t)-HC^\top C]\hat x +B(t)\\
\dot H &= HA^\top (t)+A(t)H-HC^\top C H+S,
}
with $H(0)>0,\;S>0$, which ensures $\liminf |\hat x(t)-x(t)|=0$ (exp), provided the pair $(A(t),C)$ is uniformly completely observable, that is, if it satisfies
\begequ
\lab{uco}
c_1 \cali_n \geq \int_{t_0}^{t_0+T}\Phi^\top (\tau,t_0)C^\top C \Phi(\tau,t_0)d\tau \geq c_2 \cali_n,\;\forall t_0 \geq 0,
\endequ
for some positive constants $c_1, c_2$ and $T$---see \cite{AND}. Alas, $\Phi(t) \to 0$ (exp) for the power converter system, complicating the task of satisfying the lower bound on the inequality.
\subsubsection*{GPEBO with classical gradient estimators}
From the LRE \eqref{ymmincphi} we can propose a classical gradient estimator of $\theta$, that is
$$
\dot {\hat \theta}=\gamma\Phi^\top  C^\top (y_m-C\xi-C\Phi\hat \theta),
$$
whose error equation is given by
$$
\dot {\tilde \theta}=-\gamma\Phi^\top  C^\top  C\Phi\tilde \theta.
$$
It is well-known \cite{SASBOD} that this error equation is globally exponentially stable if and only $C\Phi$ is PE---that is, if and only if \eqref{uco} is satisfied. As discussed above, this condition is hard to satisfy in our application.

A gradient estimator can also be applied to the LRE $Y=\Omega \theta$, obtained with the dynamic regressor extension \eqref{doty}, \eqref{dotome}. The PE requirement is now imposed to the matrix $\Omega$, which is a far more stringent condition than Assumption \ref{ass1}.

Another possibility is to apply the concurrent learning algorithms proposed in \cite{CHOetal,KAMetal} that require only sufficient excitation of the regressor $\Omega$. This is, again, a more stringent requirement than Assumption \ref{ass1}.
%
%%%%%%%%%%%%%%%%%%%
\section{Observer-based PI-PBC}
\lab{sec4}
%%%%%%%%%%%%%%%%5
%
In this section we propose our output-feedback PI-PBC, which is obtained combining the PI controller of Proposition \ref{pro1} with the state observer reported in the previous section.

\begin{proposition}\em
\lab{pro3}
Consider the system \eqref{sys} with the measurable outputs \eqref{ym} in closed-loop with the PI controller of Proposition \ref{pro1}, where the state $x$ is replaced by an estimate $\hat x$  generated via the FCT GPEBO of Proposition \ref{pro2} and Assumption \ref{ass1} holds.

Then, {\em for all} initial conditions of the plant and the controller $(x(0),x_c(0)) \in \mathbb{R}^{n} \times \rea^ m$ and {\em all} initial conditions of the observer $(\xi(0),Y(0),\Omega(0),\hat \theta(0)) \in \rea^{n} \times \rea^{n} \times \rea^{n \times n} \times \rea^{n}$, the trajectories of the closed-loop system are bounded and satisfy \eqref{stareg}.
\end{proposition}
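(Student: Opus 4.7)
The plan is to exploit the finite-convergence-time property from Proposition \ref{pro2} to split the argument at $t=t_c$. For $t\geq t_c$, relation \eqref{hatequx} gives $\hat x(t)=x(t)$, so the observer-based control law coincides pointwise with the full-state feedback PI-PBC \eqref{pi} of Proposition \ref{pro1}. Hence on $[t_c,\infty)$ the closed-loop is exactly the system analyzed in Proposition \ref{pro1}, started at $(x(t_c),x_c(t_c))$, and boundedness and regulation \eqref{stareg} follow immediately \emph{provided} no finite escape occurs on $[0,t_c]$.

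To rule out finite escape on the compact interval $[0,t_c]$ I would use a bootstrap argument exploiting that the key observer signals admit a priori bounds independent of the plant trajectory. The Lyapunov computation \eqref{dotu} for $\Phi$ uses only the dissipation structure $QRQ>0$ and the skew-symmetry of the interconnection matrices, so $\|\Phi(t)\|$ is uniformly bounded regardless of $u$. The parameter error dynamics \eqref{parerrequ} give $|\tilde\theta(t)|\leq|\tilde\theta(0)|$ unconditionally, and the clipping $1-\omega_c\geq\mu>0$ then ensures $\hat\theta_{\tt FCT}$ is uniformly bounded on $[0,t_c]$ by a constant depending only on $|\theta|$, $|\hat\theta(0)|$ and $\mu$. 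Combining these facts, the observation error $\hat x-x=\Phi(\hat\theta_{\tt FCT}-\theta)$ admits a uniform bound $M$ on $[0,t_c]$ that does \emph{not} depend on $x$ itself.

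I would then close the loop by writing $u=u_{\text{nom}}(x,x_c)+w(t)$ with $u_{\text{nom}}$ the nominal state-feedback law of Proposition \ref{pro1} and $w(t):=-K_P\calc(\hat x(t)-x(t))$, which satisfies $|w(t)|\leq\|K_P\calc\|M$ on $[0,t_c]$. Evaluating the Proposition \ref{pro1} Lyapunov function $W(\tilde x,\tilde x_c)=\hal\tilde x^\top Q\tilde x+\hal\tilde x_c^\top K_I\tilde x_c$ along the perturbed dynamics \eqref{dottilx} yields
$$
\dot W=-\tilde x^\top QRQ\tilde x-\tilde y^\top K_P\tilde y+\tilde x^\top QG_N(x)w+\tilde x_c^\top K_I\calc(\hat x-x),
$$
and the strict positive-definiteness of $QRQ$ absorbs the perturbation terms via Young's inequality, yielding $\dot W\leq -c_1W+c_2$ on $[0,t_c]$ and excluding finite escape.

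The step I expect to be the main obstacle is the cross-term $\tilde x^\top QG_N(x)w$: since $G_N(x)$ is affine in $x$, this term looks quadratic in $\tilde x$ at first sight, but the skew-symmetry of $QJ_iQ$ annihilates the $\tilde x^\top QJ_iQ\tilde x$ component, leaving only a term linear in $\tilde x$ that is routinely dominated by the strict dissipation. Concatenating the two phases then completes the proof.
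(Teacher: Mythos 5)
Your proposal is correct and follows essentially the same route as the paper: split the analysis at $t=t_c$, observe that for $t\geq t_c$ the loop coincides with the full-state PI-PBC of Proposition \ref{pro1}, and rule out finite escape on $[0,t_c]$. In fact you supply more detail than the paper, whose proof merely ``invokes the absence of finite-escape times''; your a priori bounds on $\Phi$, $\tilde\theta$ and $\hat\theta_{\tt FCT}$ (independent of the running plant trajectory, hence non-circular) and the identity $\tilde x^\top Q G_N(x)w=\tilde y^\top w$ via skew-symmetry of $QJ_iQ$ are exactly the right ingredients to make that step rigorous. One small correction: the integrator perturbation $\tilde x_c^\top K_I\calc(\hat x-x)$ cannot be absorbed by $-\tilde x^\top QRQ\tilde x$, since $\dot W$ contains no negative definite term in $\tilde x_c$; so you do not get $\dot W\leq -c_1W+c_2$, but only $\dot W\leq c_1 W+c_2$ after Young's inequality --- which is still enough, via Gr\"onwall, to exclude finite escape on the compact interval $[0,t_c]$ and complete the argument.
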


\begin{proof}
First, notice that the output signal \eqref{tily}, evaluated with $\hat x$ instead of $x$, may be written as
\begalis{
\calc (\hat x - {x^\star}   ) &= \calc[x - {x^\star}    + \Phi (\hat\theta_{\tt FCT}-\theta)]\\
& = \tilde y + \calc \Phi  (\hat\theta_{\tt FCT}-\theta).
}
Consequently, the PI-PBC \eqref{pi} takes the form
\begalis{
\dot x_c &=\tilde y + \calc \Phi  (\hat\theta_{\tt FCT}-\theta) =:F_{x_c}(x) + \calc \Phi  (\hat\theta_{\tt FCT}-\theta) \\
\tilde u &= -K_P \tilde y -K_I \tilde x_c - K_P \calc \Phi  (\hat\theta_{\tt FCT}-\theta).
}
Replacing the control signal $\tilde u$ above in the system dynamics \eqref{dottilx} yields
\begalis{
\dot {\tilde x} &=\left(J_0+\sum^m_{i=1}J_iu^\star  _i-R\right)Q \tilde x-G_N(x)[K_P \tilde y +K_I x_c + K_P \calc \Phi  (\hat\theta_{\tt FCT}-\theta)] \\
&=:F_x(x,x_c)-G_N(x)K_P \calc \Phi (\hat\theta_{\tt FCT}-\theta) ,
}
where $F_x(x,x_c)$ corresponds to the system in closed-loop with the full-state measurable input signal $\tilde u$.

Proposition \ref{pro1} ensures that $({x^\star},x_c^\star )$ is an asymptotically stable equilibrium of the unperturbed system.  Now, as shown in Proposition \ref{pro2}, Assumption \ref{ass1} ensures that $\hat\theta_{\tt FCT}(t)=\theta$ for all $t \geq t_c$. Invoking the absence of finite-escape times and the aforementioned asymptotic stability allows us to conclude that \eqref{stareg} holds globally.
\end{proof}

\begrem
\lab{rem5}
From \eqref{tilthe} it is clear that, under the assumption that $\Delta \notin \call_2$, it is possible to globally, asymptotically reconstruct the state $x$ with the simpler state estimate $\hat x=\xi+\Phi\hat\theta$. However, proving that the resulting certainty equivalent observer-based PI-PBC is globally stable seems very elusive. There are two technical reasons for this situation: first, that neither for the $(\tilde x, \tilde x_c)$ subsystem nor for the $\tilde \theta$  subsystem, we dispose of exponential stability proofs, which are needed to dominate---with suitable negative definite quadratic terms---the sign-indefinite cross-terms appearing in the Lyapunov-based analysis. Second, the perturbation term of the $\tilde x$ dynamics, although linear in $\tilde \theta$, is multiplied by the factor $G_N(x)$, whose boundedness is yet to be proven. On the other hand, invoking \cite[Theorem 3.1]{VID}, it is possible to establish a local stability result simply introducing a normalization to the DREM estimator.
\endrem
%%%%%%%%%%%%%%%%%%%%%%
\section{Application to \'{C}uk Converter}
\lab{sec5}
%%%%%%%%%%%%%%%%%%%%%5
%
In this section we apply the results of the paper to design of an output-feedback PI for the \'{C}uk converter assuming measurable the output voltage. For this example, it is possible to design an adaptive version of it, where we treat the resistive load as an {\em unknown} parameter.
%%%%%%%%%%
\subsection{Control problem formulation}
\lab{subsec51}
The average model of the \'{C}uk converter is given by \cite{ASTKARORTbook,ORTetalbook}
\begin{align*}
L_1\dot i_1=&-r_1i_1-(1-u)v_2+E,\nonumber\\
C_1\dot v_2=&(1-u)i_1+ui_3,\nonumber\\
L_2\dot i_3=&-r_2i_3-u v_2-v_4,\nonumber\\
C_2\dot v_4=&i_3-\frac{v_4}{r},
\end{align*}
where $(i_1,v_2,i_3,v_4)$ is the state vector and $u \in (0,1)$ is the duty ratio of the transistor switch, with the signals and all the constants shown in Fig. \ref{fig1}. The model can be rewritten in the form \eqref{sys} with $x=(L_1i_1,C_1v_2,L_2i_3,C_2v_4)$, $m=1$ and the matrices
\begin{align*}
J_0=&\left[
      \begin{array}{cccc}
        0 & -1 & 0 & 0 \\
        1 & 0 & 0 & 0 \\
        0 & 0 & 0 & -1 \\
        0 & 0 & 1 & 0 \\
      \end{array}
    \right],\;
J_1=\left[
      \begin{array}{cccc}
        0 & 1 & 0 & 0 \\
        -1 & 0 & 1 & 0 \\
        0 & -1 & 0 & 0 \\
        0 & 0 & 0 & 0 \\
      \end{array}
    \right],\;G_0=\begmat{ 1 \\ 0 \\ 0 \\ 0},\;G_1=0 \\
R=&\text{diag}\Big\{r_1, 0, r_2, \frac{1}{r}\Big\},\;Q=\text{diag}\Big\{\frac{1}{L_1}, \frac{1}{C_1}, \frac{1}{L_2}, \frac{1}{C_2}\Big\}.
\end{align*}
The control objective is to drive the output voltage $x_4$ towards a desired valued $x_4^\star  <0$, assuming only partial state measurements.

\begin{figure}[h]
\centering
\includegraphics[width=0.7\linewidth]{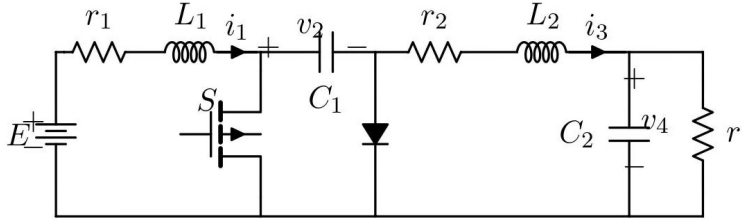}
	\caption{DC-DC \'{C}uk converter circuit}
	\label{fig1}
\end{figure}

%%%%%%%%%%
\subsection{State observation problem}
\lab{subsec52}
The problem of state observation of this system was solved using PEBO in \cite{ORTetalscl} assuming measurable either $(x_2,x_4)$ or $(x_2,x_3)$. For the latter measurements an adaptive observer and an output feedback controller were designed in \cite{ASTKARORTbook} using the immersion and invariance technique. To show the benefit of the present GPEBO approach we assume here that {\em only} $x_4$ is available for measurement---which is a physically reasonable assumption. Hence, in \eqref{ym} we have $C=\left[ \begin{array}{cccc} 0 & 0 & 0 & 1 \\ \end{array} \right].$

Given the definitions above, the observer of Proposition \ref{pro2} can be designed with
\begin{align*}
\nonumber
\Lambda(u)=
\begin{bmatrix}
-\dfrac{r_1}{L_1}& -\dfrac{1-u}{L_1}& 0& 0&  \\
\dfrac{1-u}{C_1}& 0& \dfrac{u}{C_1}& 0& \\
0& -\dfrac{u}{L_2}& -\dfrac{r_2}{L_2}& -\dfrac{1}{L_2}&  \\
0& 0& \dfrac{1}{C_2}& 0&
\end{bmatrix}.
\end{align*}

%%%%%%%%%%
\subsection{Design of the PI-PBC of Proposition \ref{pro1}}
\lab{subsec53}
Some lengthy, but straightforward, calculations show that, fixing $x_4=x_4^\star $, the equilibria of the state can be parameterized as
\begin{align*}
x^\star = \begin{bmatrix}
{-\frac{u^\star  }{r(1-u^\star )}}\\
-\frac{1}{u^\star } (1+\frac{r_2 }{r})\\
\frac{1 }{r}\\
1
\end{bmatrix}x_4^\star =:d(u^\star ) x_4^\star
\end{align*}
with the equilibrium value for the control $u^\star $ given as
\begin{align*}
u^\star =\frac{-a_1+\sqrt{a_1^2+4 a_2 a_0}}{2 a_2},
\end{align*}
where we defined the constants
$$
a_0:=(r +r_2) x^\star _4,\;a_1:=E r - 2(r +r_2) x^\star _4,\;a_2 := (r_1+ r +r_2) x^\star _4 - Er.
$$

Given these definitions it is possible to define the incrementally passive output $\tilde y$ given in \eqref{tily} as
\begin{align*}
\tilde{y} = -(x^\star)^\top Q J_1 Q x={x_1^\star  \over L_1}  {x_2\over C_1} + {x_2^\star   \over C_1}\Big({x_3 \over L_2} - {x_1\over L_1}\Big) - {x_3^\star  \over L_2}{ x_2 \over C_1}.
\end{align*}
Notice that this signal does not depend on $x_4$.

%%%%%%%%%%
\subsection{Simulation results}
\lab{subsec54}
In all the simulations presented in this section the converter was put in closed-loop with the output-feedback PI-PBC of Proposition \ref{pro3} with the gains $K_P=10$ and $K_I=5$, and the desired output voltage fixed at $x_4^\star =-20$.  The circuit parameters are shown in Table \ref{tab:simpars}. In all simulations, the control signal was saturated to verify that it remains in the valid interval (0,1). For the sake of brevity the plots are omitted.

\begin{table}[!ht]
\caption{Circuit parameters of the \'{C}uk converter.}
\centering
\begin{tabular}{lccc}
\hline
~Parameter~&Symbol (unit)~~& \multicolumn{2}{c}{Value} \\
\hline
~Input voltage~~ & $E~(\text{V})$~~~ & 12 \\
~Reference output voltage~~ & $x_{4\star}(\text{V})$~~ & -20 \\
~Parasitic Resistance~~ & $r_1~(\text{$\Omega$})$~~ & 1.7 \\
~Parasitic Resistance~~ & $r_2~(\text{$\Omega$})$~~ & 1.7 \\
~Nominal resistance~~ & $r~(\text{$\Omega$})$~~~ & 20\\
~Inductance~~ & $L_1~(m \text{H})$~~~ &  10\\
~Inductance~~ & $L_2~(m \text{H})$~~~ &  10\\
~Capacitance~~ & $C_1~(\mu \text{F})$~~~ & 22\\
~Capacitance~~ & $C_2~(\mu \text{F})$~~~ & 22.9\\
\hline
\end{tabular}
\label{tab:simpars}
\end{table}

\subsubsection*{Performance of the FCT-GPEBO}

First, we evaluate the effect of the {\em adaptation gain} $\gamma$ in the FCT-GPEBO of Proposition \ref{pro2}. We choose $\lambda=5$ and $\mu=10^{-6}$. The initial conditions of the system are $x(0)=(0.75L_1,15C_1,-1.5L_2,-18C_2)$, while those of the FCT-GPEBO are set equal to zero. The transient behavior of the state observation errors, for different adaptation gains $\gamma$, are shown in Figs. \ref{i1_x_known}-\ref{i3_x_known}. As expected, convergence is faster, for larger $\gamma$.

In Fig. \ref{u4_step} we show the transients of the output voltage for two cases:
\begite
\item  a step change in $x_4^\star $ at $t=0.015$ (from $x_4^\star =-15$ to $x_4^\star =-5$),
\item a step change in the load $r$, at $t=0.015$ (from $r=20$ to $r=30$).
\endite
As seen from the figure the regulated output follows quite rapidly the change of reference and is highly insensitive to the load change.

\begin{figure}[!ht]
  \centering
	  \includegraphics[width=1\linewidth]{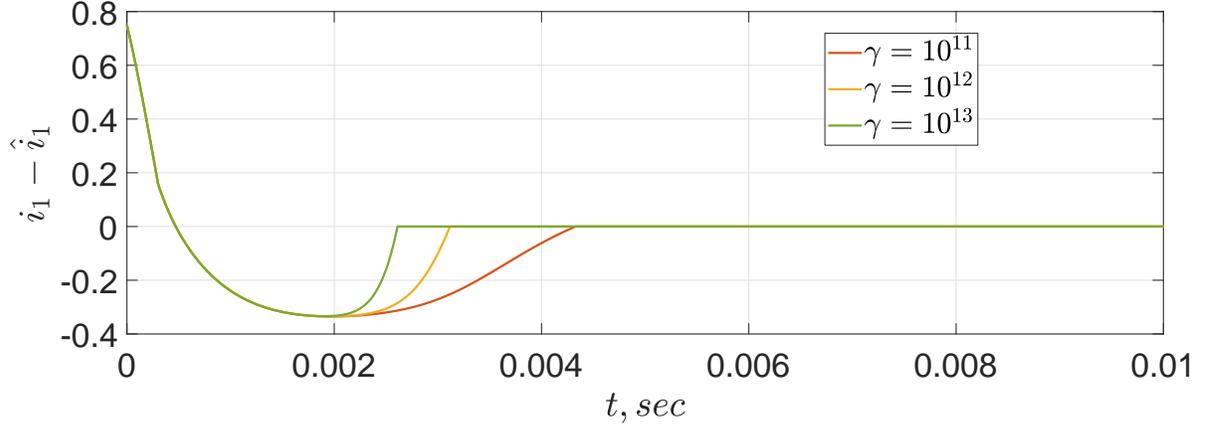}\\
  \caption{Transients of $i_1-\hat i_1$ of the FCT-GPEBO for different adaptation gains}
  \label{i1_x_known}
\end{figure}

\begin{figure}[!ht]
	\centering
	\includegraphics[width=1\linewidth]{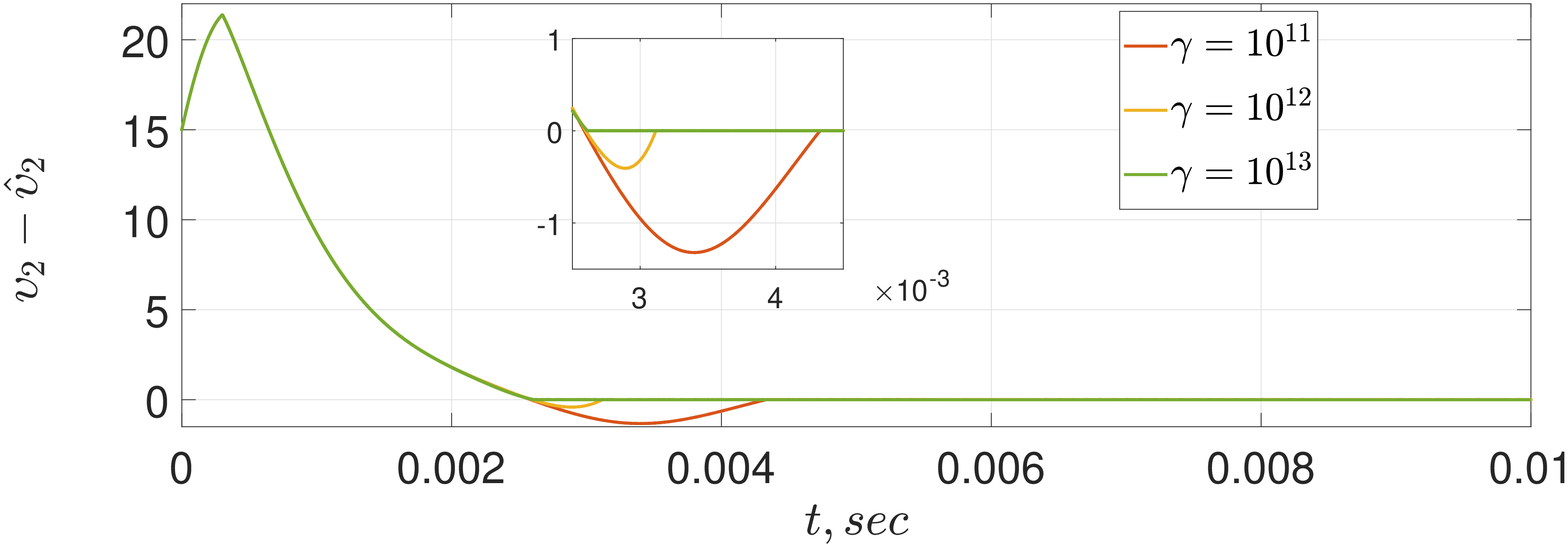}\\
	\caption{Transients of $\upsilon_2-\hat \upsilon_2$ of the FCT-GPEBO for different adaptation gains}
	\label{u2_x_known}
\end{figure}

\begin{figure}[!ht]
	\centering
	\includegraphics[width=1\linewidth]{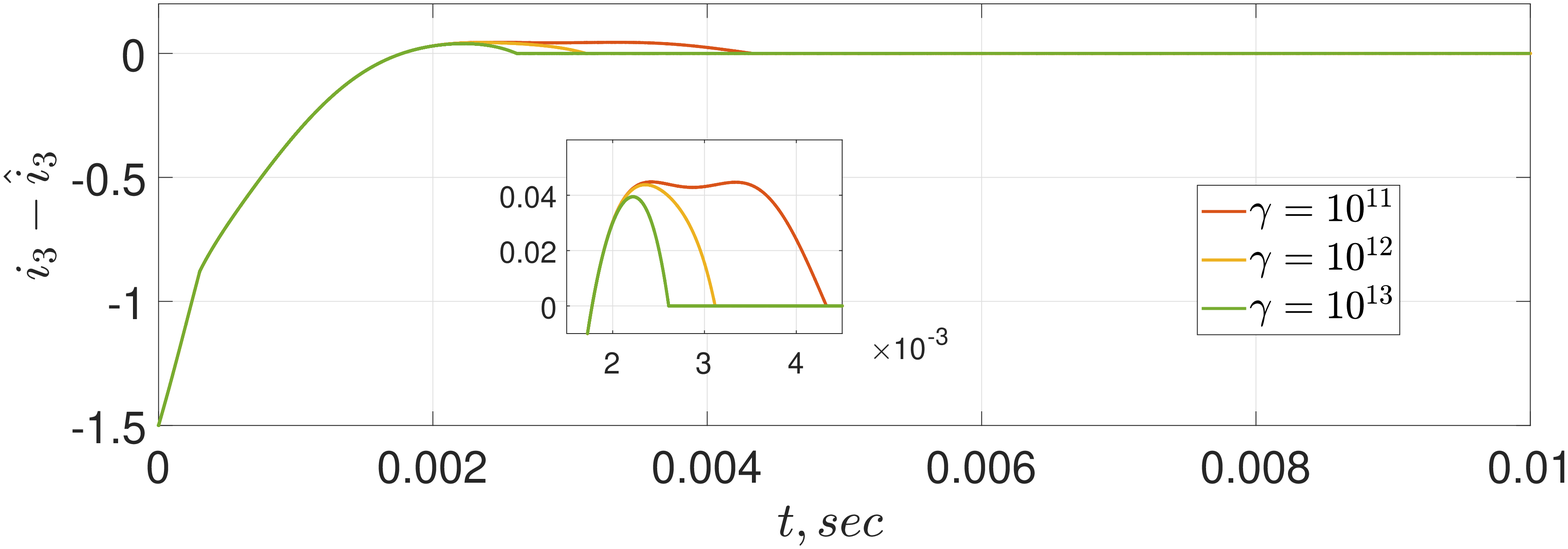}\\
	\caption{Transients of $i_3-\hat i_3$ of the FCT-GPEBO for different adaptation gains}
	\label{i3_x_known}
\end{figure}

\begin{figure}[!ht]
	\centering
	\includegraphics[width=1\linewidth]{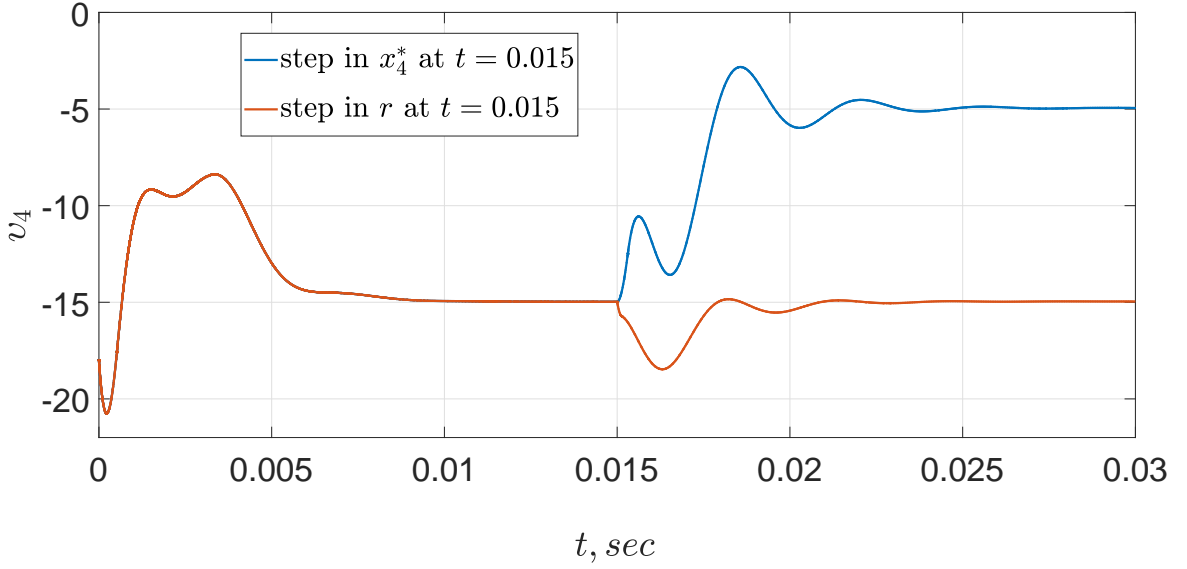}\\
	\caption{Transients of $u_4$ of the FCT-GPEBO for step change in reference signal and load}
	\label{u4_step}
\end{figure}

\subsubsection*{Comparison of the performance of FCT-GPEBO with other observers}

Second, to compare the performance of the FCT-GPEBO of Proposition \ref{pro2} with other observers, we present now simulations of the output-feedback PI-PBC where we replaced this observer by one of the observers discussed in Subsection \ref{subsec32} and the asymptotic GPEBO of Remark \ref{rem5}.  For all observers, the initial conditions of the converter were taken as $x(0)=\col(0.75L_1,15C_1,-1.5L_2,-18C_2)$. For the PEBOs we used $\lambda=5$ and $\gamma=10^{12}$.  For the KBF we used $S=I_{4\times4}$ and  $H(0)=I_{4\times4}$. For the classical gradient estimator we set to zero $\hat{\theta}(0)$ and used $\gamma=10^8$. For the asymptotic GPEBO we used $\gamma=10^{17}$.

Figs.~\ref{i1_comp}-\ref{i3_comp} show the transients of the state estimation errors for the following estimation methods: FCT-GPEBO, asymptotic GPEBO, open-loop emulator, KBF and GPEBO with classical gradient estimator. The following observations are in order.
\begenu[{\bf O1}]
\item As predicted by the theory all state estimation errors of FCT-GPEBO converge in finite time with $t_c \approx 0.03$s, and the performance of this observer is superior to all the other ones.
\item The emulator and the KBF have almost identical behavior. Actually, in the plots the difference is indistinguishable. The performance of the KBF didn't change with other choices of $S$.
\item The asymptotic GPEBO outperforms all other observers, except the FCT-GPEBO.
\item The GPEBO plus gradient also converges, but has an erratic behavior, whose origin we fail to understand.
\endenu

\begin{figure}[h]
	\centering
		\includegraphics[width=1\linewidth]{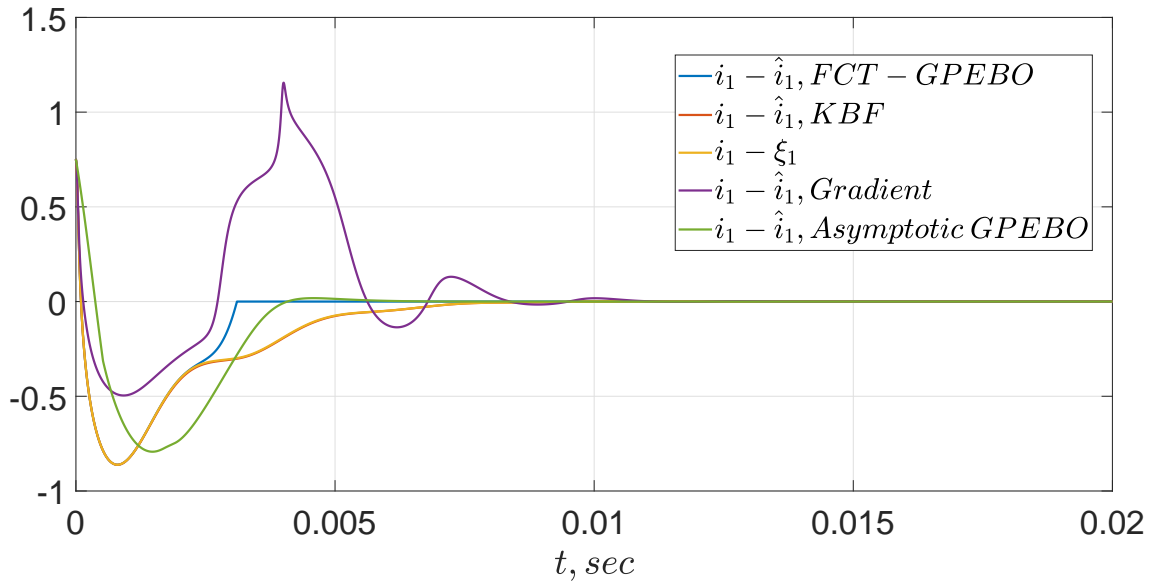}
	\caption{Transients of $i_1$ estimation errors for diffrerent algorithms}
	\label{i1_comp}
\end{figure}

\begin{figure}[h]
	\centering
	\includegraphics[width=1\linewidth]{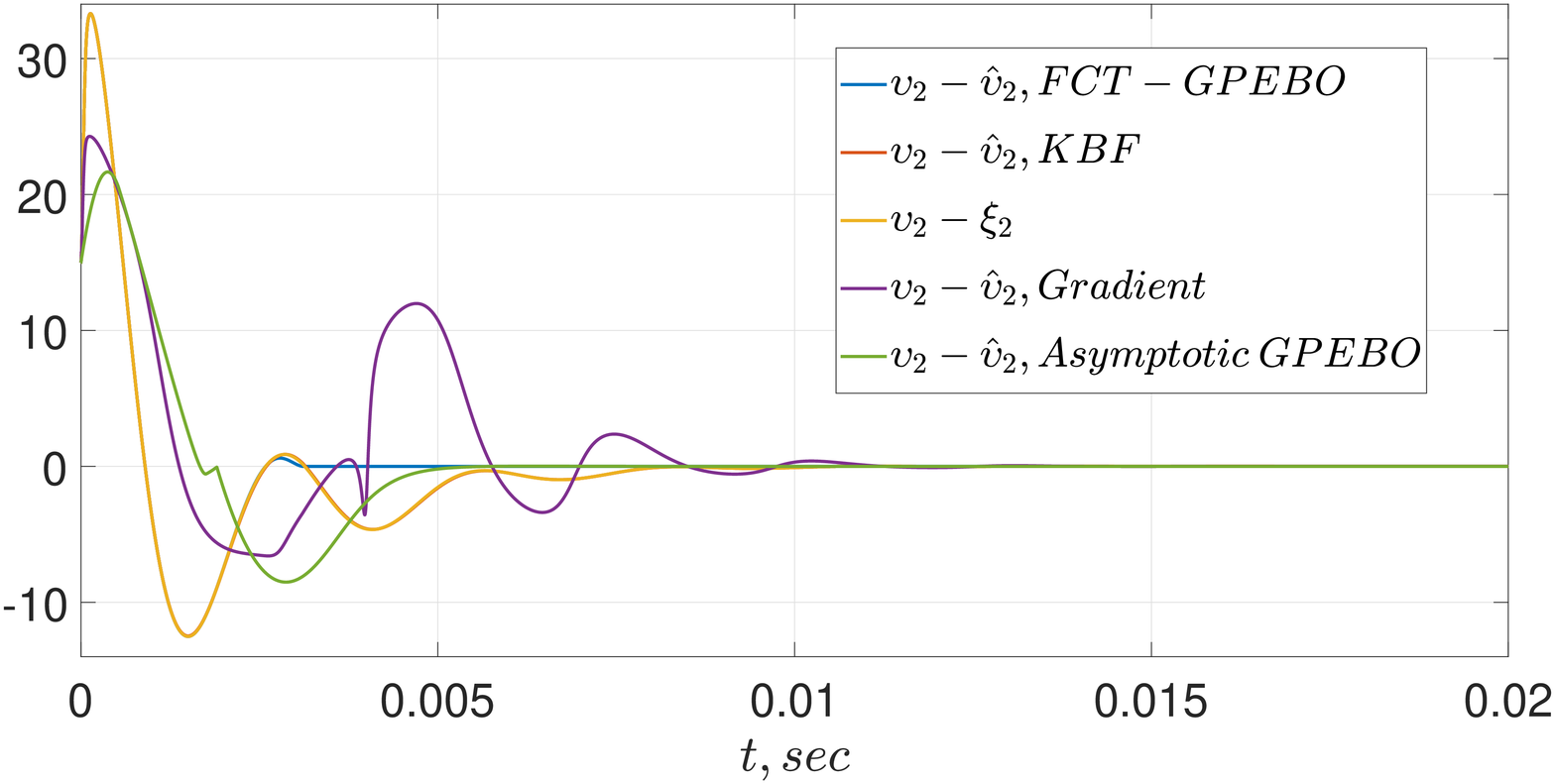}
	\caption{Transients of $\upsilon_2$ estimation errors for diffrerent algorithms}
	\label{u2_comp}
\end{figure}

\begin{figure}[h]
	\centering
	\includegraphics[width=1\linewidth]{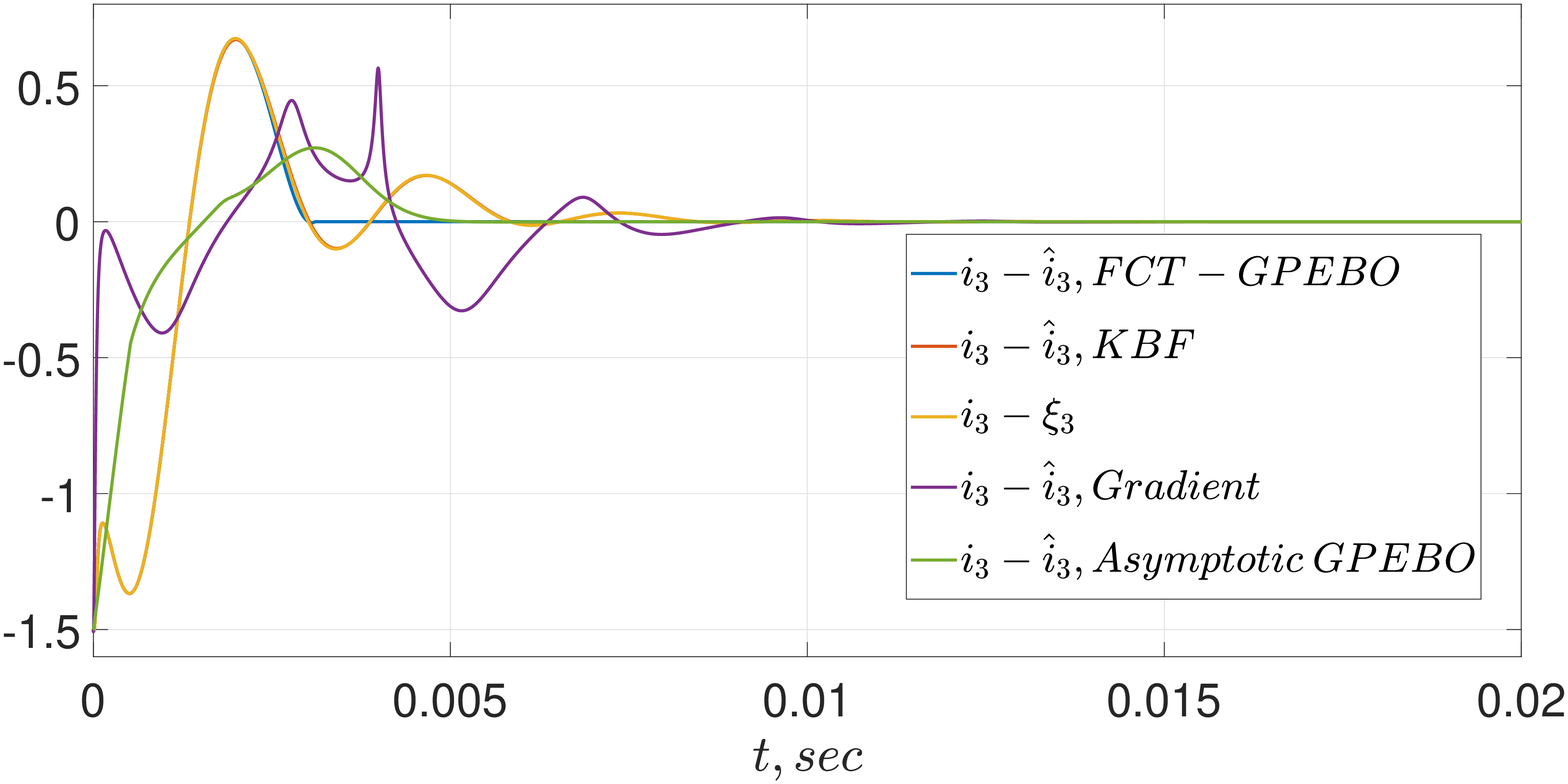}
	\caption{Transients of $i_3$ estimation errors for diffrerent algorithms}
	\label{i3_comp}
\end{figure}
\subsubsection*{Performance of the PI-PBC with full-state measurement and FCT-GPEBO}

Second, we compare the performance of the full-state feedback PI-PBC with the one using the observed state. We look in this case at the effect of the {\em initial conditions}. For simulations we considered the initial conditions:
\begalis{
x(0) &=\col(0.5L_1,10C_1,-1L_2, -12C_2)\\
x(0)&=\col(0.25L_1,5C_1,-0.5L_2,-6C_2)\\
x(0)&=\col(0.75L_1,15C_1,-1.5L_2,-18C_2),
}
and in FCT-GPEBO we chose the parameters $\lambda=5$, $\gamma=10^{12}$. The behavior of the output voltage for PI-PBC with the {\em known} state is shown in Fig.~\ref{u4_x_known}, while the one of the observed state is shown in Fig. \ref{u4_x_unknown}. As seen from the figures the reconstruction of the state induces some oscillations at the beginning but the regulation of the output is achieved at almost the same time.

\begin{figure}[h]
	\centering
	\includegraphics[width=1\linewidth]{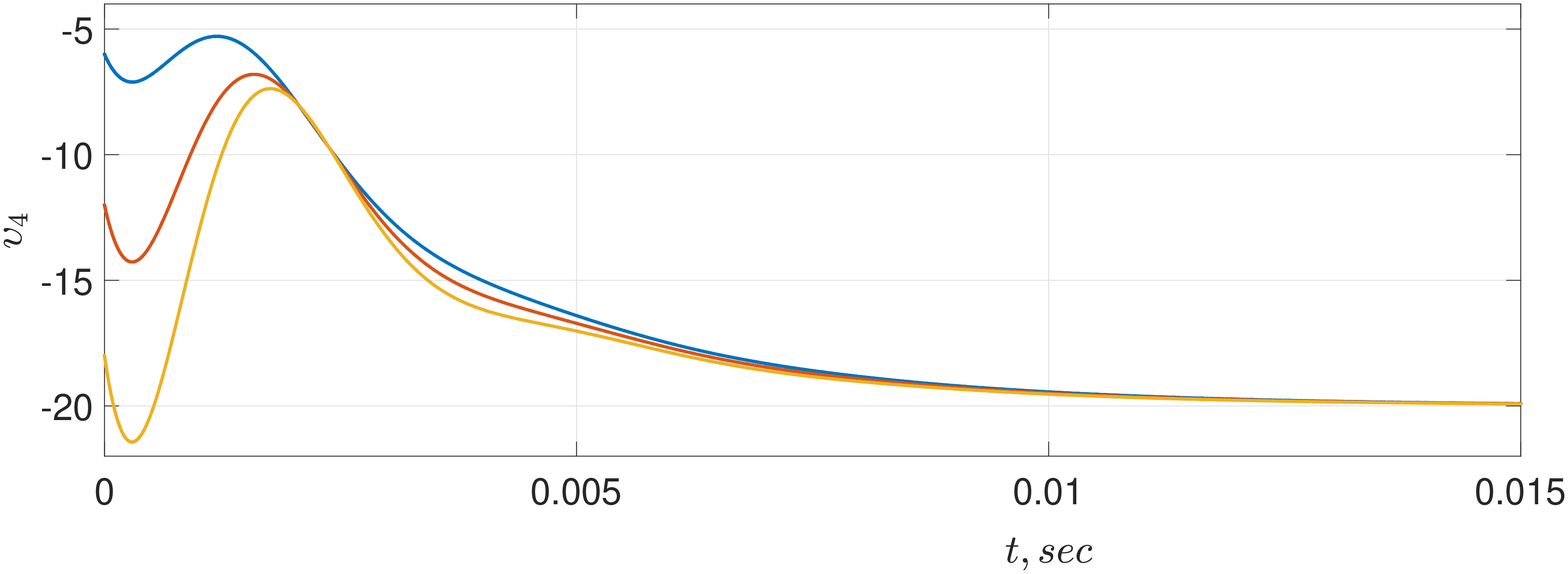}
	\caption{Transients of $\upsilon_4$ for different initial conditions and the PI-PBC with {\em known} $x$}
	\label{u4_x_known}
\end{figure}

\begin{figure}[h]
	\centering
	\includegraphics[width=1\linewidth]{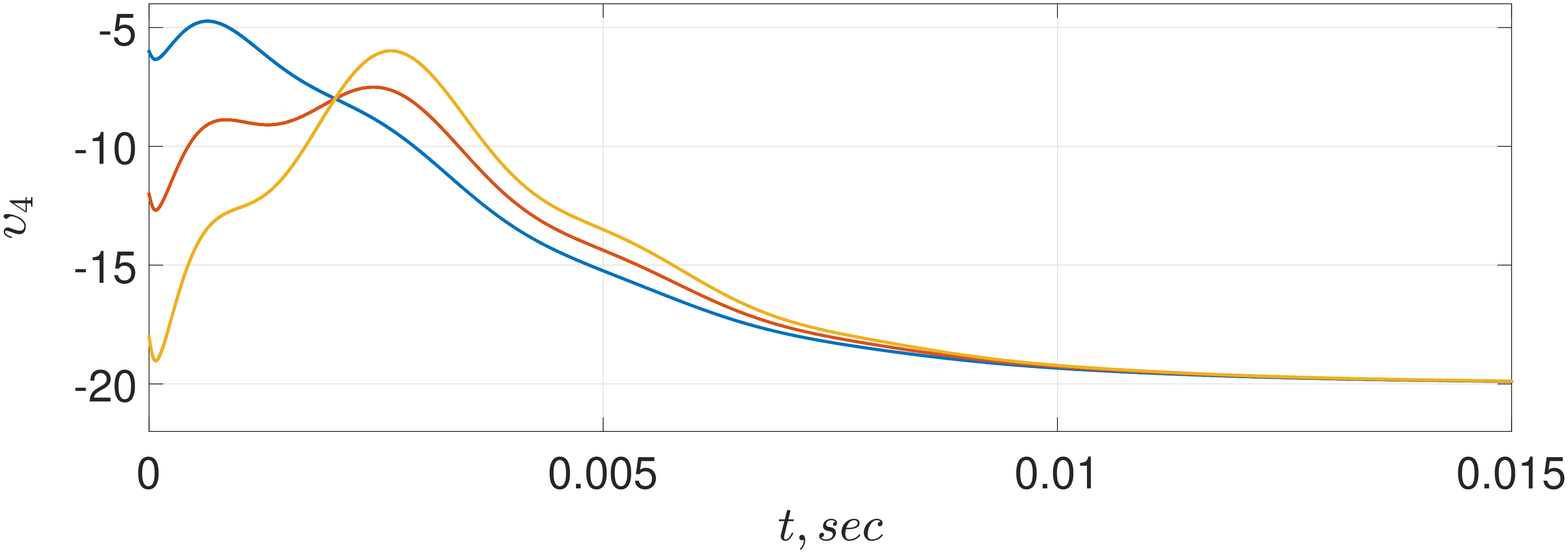}
	\caption{Transients of $\upsilon_4$ for different initial conditions and the PI-PBC with {\em unknown} $x$}
	\label{u4_x_unknown}
\end{figure}

Finally, we compare the output-feedback PI-PBC with a {\em classical} PI wrapped around the output voltage error, that is
\begin{align*}
\dot x_c &= x^\star _4 - x_4 \\
u & = -K_P(x^\star _4 - x_4)-K_I x_c.
\end{align*}
The results are shown in Fig. \ref{fig5}, showing the superior performance of the PI-PBC.

\begin{figure}[!ht]
  \centering
\includegraphics[width=1\linewidth]{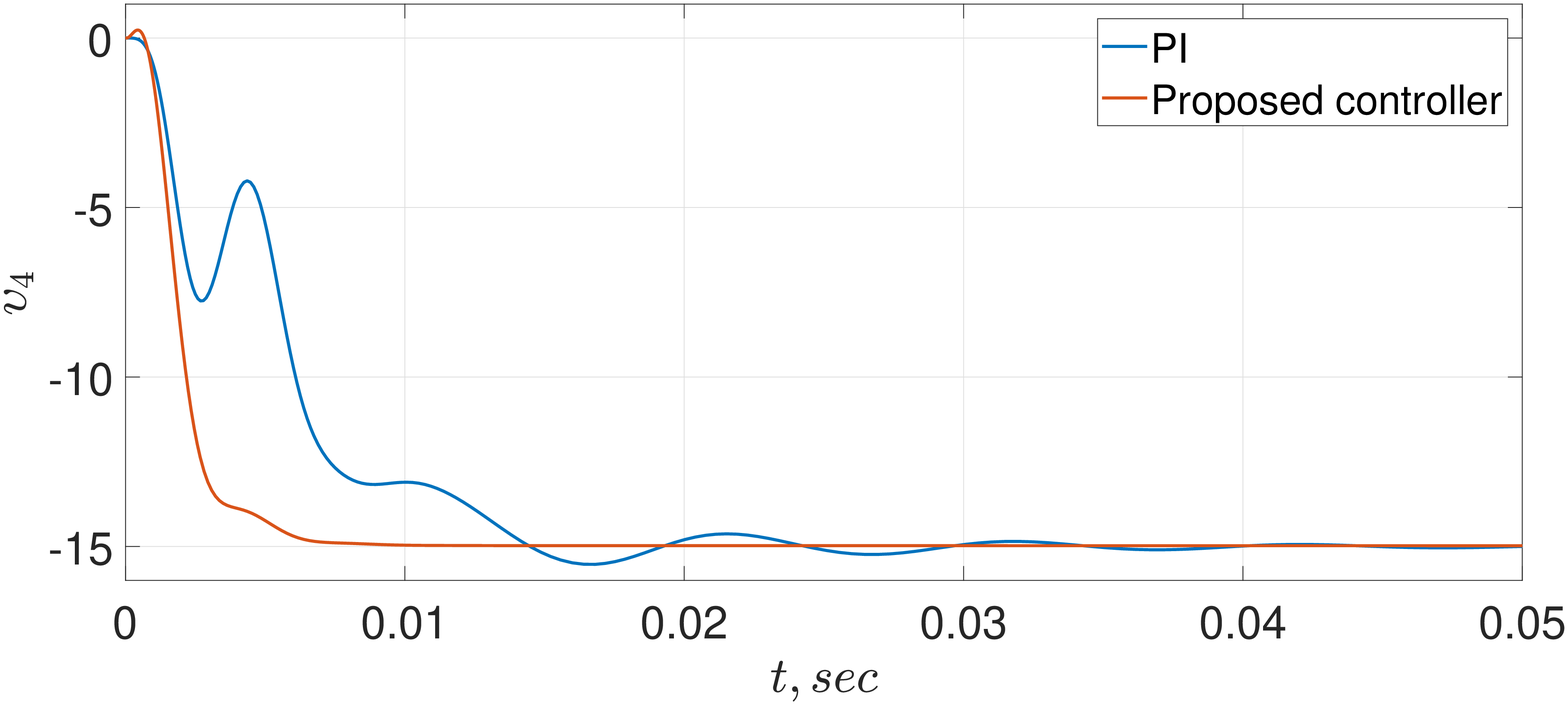}\\
  \caption{The response curves of $x_4$ of the closed-loop system under traditional PI with $K_P=0.008, K_I=8$ and the output-feedback PI-PBC, with $x_4^\star =-15$}
  \label{fig5}
\end{figure}

It may be argued that the response of the classical PI can be improved re-tuning the gains.  The usual procedure to improve the transient performance of a PI is to increase the integral gain $K_I$, see \cite{WANetalbook}. This is shown in Fig. \ref{fig6} where the transient performance is, indeed, improved but is still inferior to the proposed PI-PBC.

\begin{figure}[!ht]
  \centering
  \includegraphics[width=1\linewidth]{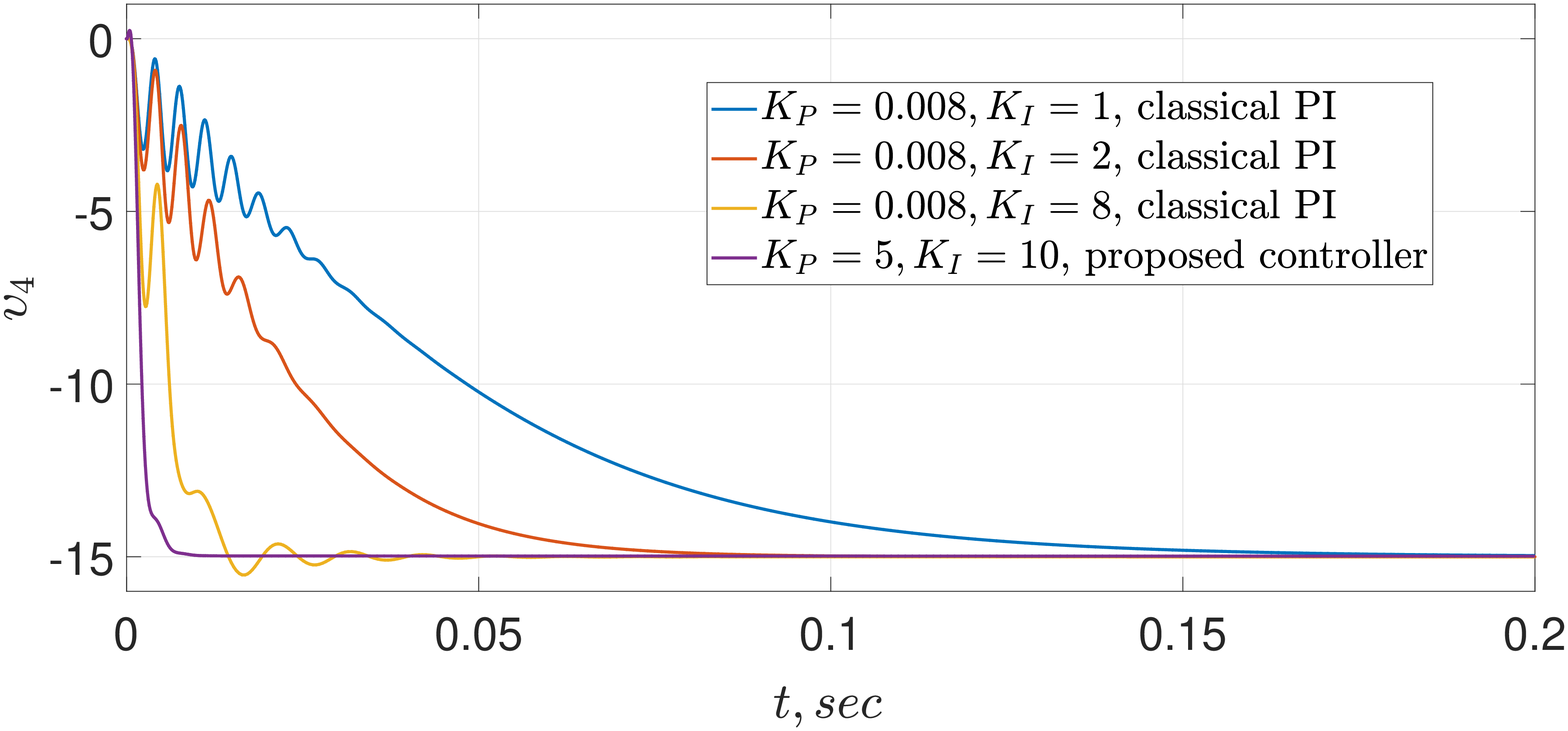}\\
  \caption{The response curves of $x_4$ of the closed-loop system under traditional PI with $K_P=0.008$ and different values of $K_I$, with $x_4^\star =-15$}
  \label{fig6}
\end{figure}

%
%%%%%%%%%%%%%%%%%%%%%%%%%
\section{Conclusions an Future Work}
\lab{sec6}
%%%%%%%%%%%%%%%%%%%%%%%
%
We have provided a solution to the main drawback of the popular PI-PBC of \cite{HERetal}, namely the need to measure the full state, by proposing a globally convergent state observer. The convergence of this observer occurs in finite time and requires only a very weak excitation assumption.

In view of the high uncertainty on the converter parameters---in particular, the variations of the load---our current research focuses on the development of adaptive versions of the observer and the PI-PBC. Some results for particular converter topologies are already available and will be reported elsewhere.

We are also working on the experimental implementation of the proposed PI-PBC for a \'{C}uk and a quadratic converter, that we expect to have ready for the final version of this paper.

%%%%%%%%%%%%%%%%%%%%%%%%%%%%%%

%%%%%%%%%%%%%%%%%%%%%%
\end{document}